\setlist[enumerate]{itemsep=0mm}
\newif\ifdraft\draftfalse
\newcommand\modedraft[1]{#1}
\newcommand\todo[1]{{\color{purple}[\textbf{To do:} #1]}}
\newcommand\bmcomment[1]{\marginpar[{\color{blue}\small\dbend}]{\color{blue}\small\dbend}{\footnotesize \color{blue}[#1 - \textbf{Bastien}]}}
\newcommand\skcomment[1]{\marginpar[{\color{purple}\small\dbend}]{\color{OliveGreen}\small\dbend}{\footnotesize \color{purple}[#1 - \textbf{Sophia}]}}
\newcommand\bm[1]{{\color{blue}{#1}}}
\newcommand\sk[1]{{\color{OliveGreen}{#1}}}
\newcommand\modedraft[1]{}
\newcommand\todo[1]{}
\newcommand\bm[1]{}
\newcommand\sk[1]{}
\newcommand\bmcomment[1]{}
\newcommand\skcomment[1]{}
\title{Dealing with imperfect information in Strategy Logic}
\author{Sophia Knight
\institute{LORIA - CNRS / Universit\'e de Lorraine\\ Nancy, France}
\email{sophia.knight@gmail.com}
\and
Bastien Maubert
\institute{LORIA - CNRS / Universit\'e de Lorraine\\
Nancy, France}
\email{bastien.maubert@gmail.com}
}
\begin{document}
\maketitle

\begin{abstract}
We propose  an extension of Strategy Logic (SL), in which one can both reason about
strategizing under imperfect information and about players' knowledge. One
original aspect of our approach is that we do not force strategies to be
uniform, i.e. consistent with the players' information, at the
semantic level; instead, one can express in the logic itself that a
strategy should be uniform. To do so, we first develop a
``branching-time'' version of SL  with perfect information, that we
call BSL, in which
one can quantify over the different outcomes defined by a partial assignment
of strategies to the players; this contrasts with SL, where temporal operators are
allowed only when all strategies are fixed, leaving only one possible
play.
Next, we further extend BSL by adding distributed knowledge operators, the semantics of which rely on equivalence relations on partial
plays. The logic we obtain subsumes most strategic logics with
imperfect information, epistemic or not.
\end{abstract}


\section{Introduction}
\label{sec-intro}

Over the past decade, investigation of logical systems 
for studying strategic abilities has thrived in the areas of artificial
intelligence and multi-agent systems. However, there is still no
 satisfying logical
framework  to model, specify and analyze such systems.
One of the  proposals most studied so far is  Alternating-time Temporal Logic (\ATL)
\cite{DBLP:journals/jacm/AlurHK02}, 
in which one can
specify  what objectives coalitions of agents can achieve. 
Several extensions were introduced ($\ATLs$, game
logics\ldots), but
all of these logics fail to model non-cooperative situations where agents
follow individual objectives. 
It is
well known that
studying  this kind of situation requires solution concepts from game theory,
such as Nash equilibria, that cannot be expressed in
\ATL or its extensions.

To address this shortcoming, Chatterjee, Henzinger and
Piterman recently introduced Strategy
Logic (\SL) \cite{DBLP:journals/iandc/ChatterjeeHP10}. This logic subsumes all
extensions of \ATL, and because it considers strategies as first-order
citizens in the language, it can express
fundamental game-theoretic concepts such as Nash Equilibria or
dominated strategies. 
\SL has recently been extended
and intensively studied 
 \cite{DBLP:conf/fsttcs/MogaveroMV10,DBLP:conf/concur/MogaveroMPV12,DBLP:journals/tocl/MogaveroMPV14}. 
Relevant fragments enjoying nice
computational characteristics have been identified. In particular, the
syntactic fragment \SLOG (One-Goal Strategy Logic) is strictly more
expressive than \ATLs, but
not computationally more expensive \cite{DBLP:conf/concur/MogaveroMPV12}. 

However, despite its great expressiveness, 
there is one fundamental feature of
most real-life situations
that SL lacks, which is imperfect information. 
An agent 
has imperfect information  if
she does not know 
 the exact  state of the system at every moment, but only has
 access to an
 approximation of it. 
Considering  agents with imperfect information raises two major
theoretical issues. The first one concerns strategizing under
imperfect information. Indeed, in this context an agent's strategy must prescribe  the same
choice in all situations that are  indistinguishable to the
agent. Such strategies are called \emph{uniform strategies}, and this requirement
deeply impacts  the task of computing strategies \cite{reif84}.
The second
main theoretical challenge relates to uncertainty, deeply intertwined
with imperfect information, and it consists of representing and reasoning about agents'
knowledge. Over the past decades, much effort has been put  into
devising logical systems that address
this issue, first in static settings \cite{fagin1995reasoning} and later
 adding
 dynamics \cite{halpern2004complete,van2011logical}. 

Concerning \ATL, many variants have been introduced that deal with
imperfect information
\cite{van2003cooperation,DBLP:journals/fuin/JamrogaH04,schobbens2004alternating,jamroga2006agents}. Some
of these
numerous logics
 deal
with strategizing under imperfect information, some with reasoning about
knowledge; because it is not natural to reason about the knowledge of
agents with imperfect information without treating the strategic aspects
accordingly, as argued in \cite{DBLP:journals/fuin/JamrogaH04}, some
treat both aspects. But there still remain a number of 
logics that do so, and that essentially differ in the semantics of the strategic operator:
how much memory do agents have?  should  agents simply have a strategy to achieve
some goal? Or should they
\emph{know that there is} a strategy to do so? Or \emph{know  a
  strategy} that works? The two last notions are usually referred to as \emph{de
  dicto} and \emph{de re} strategies, respectively \cite{DBLP:journals/fuin/JamrogaH04}.

About \SL, very few works have considered imperfect information. 
\cite{DBLP:journals/corr/Belardinelli14} and \cite{DBLP:conf/cav/CermakLMM14}
propose epistemic extensions of \SL, but they do not require
strategies to be uniform  \ie
being consistent with the agents' information. In \cite{belardinelli2015}, an
epistemic strategy logic is proposed in which uniform strategies are
considered, and interestingly the \emph{de re} semantics of the
strategic operator can be
expressed in the \emph{de dicto} semantics, providing some
flexibility. However, how much memory strategies use \bmcomment{we
  hardwire it a little in the path relation}, and whether
they should be uniform or not, still has to be hardwired in the
semantics.

In this work, we propose yet another epistemic strategy logic, with
the purpose of getting rid of the constraint of enforcing what kind of
strategies are to be used at the semantic level. To do so, we first
develop a ``branching-time'' version of SL with perfect information.
In \SL,  temporal operators are
allowed only when all strategies are fixed, leaving only one possible
play. We relax this constraint 
by introducing a \emph{path quantifier}, which quantifies over the
different outcomes defined by a partial assignment of strategies to
the agents.  This enables the comparison of the various outcomes of a
strategy. Because it will be important, for instance to express the
uniformity of a strategy, to consider all the possible outcomes of a
strategy  assigned to an agent $a$, we need a way to remove in
an assignment the bindings of all agents but $a$. We thus introduce an
\emph{unbinding operator}. We call the resulting logic Branching-time
Strategy Logic (\BSL), and we prove by providing linear translations
in both directions that it has the same
expressive power and same computational complexity as \SL.  We also
present a variant of \BSL, called \BSLplus, which can in addition
refer to the actions chosen by each agent at each moment, and we
conjecture that it is strictly more expressive than \SL and \BSL.
Next, we define our Epistemic Strategy Logic (\ESL) by further
extending \BSL with distributed knowledge operators, the semantics of
which rely on equivalence relations on partial plays. We do not change the
semantics of the strategy quantifier to require them to be uniform, or
\emph{de re}, or \emph{de dicto}, or \emph{memoryless}, but we rather
show that all of these properties of strategies can be expressed in
the language, which thus subsumes most, if not all, the variants of
epistemic strategic logics with imperfect information that we know about.

The paper is structured as follows. In Section~\ref{sec-prelim} we recall the
models, syntax and semantics of \SL. In Section~\ref{sec-branching},
we define \BSL and \BSLplus, and we prove that \SL and \BSL are
equiexpressive. We then present \ESL in Section~\ref{sec-epistemic},
where we also
 show how it can express various classic properties of
strategies. We conclude and discuss future work in Section~\ref{sec-conclusion}.
Some proofs are omitted by lack of space.



\section{Preliminaries}
\label{sec-prelim}

Let $\AP$ be a countable non-empty set of \emph{atomic propositions}, 
$\Ag$ a non-empty finite set of \emph{agents} and $\Act$ a non-empty finite set of
\emph{actions}. We let $\AcPr=\Act^{\Ag}$ be the set of possible
\emph{decisions}. For $\acpr\in\AcPr$ and $a\in\Ag$, $\acpr(a)$
is the action taken by Agent~$a$ in decision $\acpr$.

\subsection{Concurrent game structures}

A \emph{concurrent game structure} (CGS) is a tuple
$\ga=(\sstates,\delta,\state_\init,\gval)$, where $\sstates$ is a
countable non-empty set of \emph{states}, $\delta:\sstates\times \AcPr
\to \sstates$ is a \emph{transition function}, $\state_\init$ is the
\emph{initial state} and $\gval:\sstates\to 2^{\AP}$ is a
\emph{valuation function}. A \emph{path} is an infinite word
$\path=\state_0 (\acpr_1, \state_1) \ldots \in \sstates\cdot
(\AcPr\times \sstates)^\omega$ such that for all $i\geq 0$,
$\state_{i+1}=\delta(\state_i,\acpr_{i+1})$, and an \emph{initial
  path} $\fpath$ is a finite prefix of a path. In the following, we
shall write $\state_0\acpr_1 \state_1\ldots$ instead of
$\state_0(\acpr_1,\state_1)\ldots$, and similarly for initial paths.
For a state $\state$ we denote by $\PathsInf(\state)$
(resp. $\PathsFin(\state)$) the set of paths (resp. initial paths)
that start in $\state$, \ie for which $\state_0=\state$. We also let
$\PathsInf$ (resp. $\PathsFin$) be the set of all paths (resp. initial
paths).  For a path $\path=\state_0\acpr_1\state_1\ldots$, for
$i,j\geq 0$, we let $\path[i]\egdef \state_i$, $\path_{\leq i}\egdef
\state_0\ldots \acpr_i\state_i$, $\path_{\geq i}\egdef
\state_i\acpr_{i+1}\state_{i+1}\ldots$ and $\path[i,j]\egdef \state_i\acpr_{i+1}\ldots\acpr_{j}\state_j$. For an initial path
$\fpath=\state_0\ldots \acpr_n\state_n$, $\last(\fpath)\egdef
\state_n$ is its last state and $|\fpath|\egdef n$ is the index of its
last state.  Given two initial paths $\fpath=\state_0\acpr_1
\state_1\ldots\acpr_n\state_n$ and $\fpath'=\state'_0\acpr'_1
\state'_1\ldots\acpr'_m\state'_m$ such that $\state_n=\state'_0$, we
let $\fpath\cdot\fpath'\egdef\state_0\acpr_1
\state_1\ldots\acpr_n\state_n \acpr'_1
\state'_1\ldots\acpr'_m\state'_m$ be their concatenation.

A \emph{strategy} is a total function
$\strat:\PathsFin\to\Act$ that assigns an action to each initial path, and we
let $\setstrat$ be the set of all strategies. 
Also, given a 
strategy $\strat$ and an initial path
$\fpath\in\PathsFin$ ending in state $\state$,
 we define the
\emph{$\fpath$-translation} of $\strat$ as the 
strategy $\strattrans{\strat}$ such that
for all
initial paths $\fpath'\in\PathsFin(\state)$,
$\strattrans{\strat}(\fpath')\egdef\strat(\fpath\cdot\fpath')$, and 
for all
initial paths $\fpath'\in\PathsFin(\state')$ with $\state'\neq\state$, $\strattrans{\strat}(\fpath')=\strat(\fpath')$.


Let $\Var$ be a countably infinite set of \emph{variables}. An \emph{assignment} is
a partial function $\assign:\Ag\union\Var \partialto \setstrat$, assigning to
each  agent and variable in its domain a strategy. 
For an assignment $\assign$, an agent $a$ and a strategy $\strat$,
$\assign[a\mapsto\strat]$ is the assignment of domain
$\dom(\assign)\union\{a\}$ that maps $a$ to $\strat$ and is equal to
$\assign$ on the rest of its domain, and similarly for
$\assign[\var\mapsto \strat]$ where $\var$ is a variable; also,
$\assign[a\mapsto\unb]$ is the assignment of domain
$\dom(\assign)\setminus\{a\}$, on which it is equal to $\assign$.
Given an assignment $\assign$ and a state $\state$, we define the
\emph{outcome} of $\assign$ in $\state$, written
$\out(\state,\assign)$, as the set of paths $\pi=\state_0\acpr_1
\state_1\ldots$ such that $\state_0=\state$, and for all $k\geq 0$,
for every agent $a$ in the domain of $\assign$,
$\acpr_{k+1}(a)=\assign(a)(\pi_{\leq k})$. We say that an assignment
$\assign$ is \emph{complete} if it assigns a strategy to each agent,
\ie $\Ag\subseteq\dom(\assign)$.  Given an assignment $\assign$ and an
initial path $\fpath$ ending in state $\state$, we define the
$\fpath$-translation of $\assign$ as the assignment
$\assigntrans{\assign}$ such that
$\dom(\assigntrans{\assign})=\dom(\assign)$, and for all
$l\in\dom(\assigntrans{\assign})$,
$\assigntrans{\assign}(l)\egdef\strattrans{\assign(l)}$ ($l$ being either a variable or an agent). 

Finally, we want (some of) our logics to be able to talk about the
precise actions taken by agents. To do so, we consider the following set of
\emph{action propositions}: $\APact\egdef\{p^a_\act \mid \act\in \Act \mbox{
  and }a\in\Ag\}$, and we let $\APplus\egdef\AP\uplus\APact$. In the
following, we will therefore always assume that CGSs are
\emph{unfolded}, such that each state $\state$ is reached by one unique
transition through some decision $\acpr_\state$, except the initial state $\state_\init$
which has no incoming transition. We can thus extend the valuation function
$\gval$ into $\gvalp$ as
follows: $\gvalp(\state_\init)\egdef\gval(\state_\init)$, and for
every state $\state\neq\state_\init$,
$\gvalp(\state)\egdef\gval(\state)\union \{p^a_{\acpr_\state(a)}\mid a\in\Ag\}$.

\subsection{Strategy Logic}
\label{sec-SL}


We recall the syntax and semantics of Strategy Logic (\SL). 
First, the set of formulas in \SL is given by the following grammar:
\[\phi \egdef p \mid \neg \phi \mid \phi\ou\phi \mid \X \phi \mid \phi
\until \phi \mid \Estrat\phi \mid (a,\var)\phi\]
where $p\in\AP$, $\var\in\setvar$ and $a\in\Ag$. 

Notice that \SL-formulas cannot talk about agents' actions. 

We define $\top$ as $p\ou\neg p$. Dual operators can be defined as usual:
$\perp\egdef\neg\top,\phi\et\phi'\egdef\neg(\neg\phi\ou\neg\phi'),\phi\release\phi'\egdef\neg(\neg\phi\until\neg\phi')$
and $\Astrat\phi\egdef\neg\Estrat\neg\phi$, and we also define the
classic temporal operators ``eventually'' and ``always'':
$\F\phi\egdef\top\until\phi$, and $\G\phi\egdef\phi\until\perp$.
Recall that $\Estrat$ is the \emph{strategy quantifier}, and $(a,\var)$
is the \emph{binding operator}: $\Estrat\phi$ reads as ``there exists
a strategy $\var$ such that $\phi$'', and $(a,\var)\phi$ reads as ``$\phi$
holds after agent $a$ is bound to the strategy denoted by $\var$''. 

For a formula $\phi$, $\free(\phi)\subseteq\Var$ is the set of
\emph{free variables} in $\phi$, \ie the set of variables $\var$ that occur
in $\phi$ without being under the scope of some quantification $\Estrat$.
In the following, given a formula $\phi$, an \emph{assignment for
  $\phi$} refers to an assignment $\assign$ such that $\free(\phi)\subseteq\dom(\assign)$.

Let $\phi$ be an \SL-formula. Given a CGS $\ga=(\sstates,\delta,\state_\init,\gval)$, an 
assignment
$\assign$ for $\phi$ and a state
$\state\in\sstates$, the semantics of $\phi$ in $\ga$ with assignment
$\assign$ at state $\state$ is defined inductively as follows:

\begin{tabular}{lcl}
 $\ga,\assign,\state\modelsSL p$ & if & $p\in\gval(\state)$\\
 $\ga,\assign,\state\modelsSL \neg\phi$ & if &
  $\ga,\assign,\state\not\modelsSL\phi$\\
 $\ga,\assign,\state\modelsSL \phi\ou\phi'$ & if &
  $\ga,\assign,\state\modelsSL\phi$ or
  $\ga,\assign,\state\modelsSL\phi'$ \\
 $\ga,\assign,\state\modelsSL\Estrat\phi$  & if & there exists
  $\strat\in\setstrats$ such that 
  $\ga,\assign[\var\mapsto\strat],\state\modelsSL \phi$\\
 $\ga,\assign,\state\modelsSL(a,\var)\phi$ & if &
  $\ga,\assign[a\mapsto\assign(\var)],\state\modelsSL \phi$\\[5pt]
  \multicolumn{3}{l}{If, in addition, $\assign$ is complete, then}\\[5pt]
 $\ga,\assign,\state\modelsSL\X\phi$ & if &
  $\ga,\assigntrans[\path_{\leq 1}]{\assign},\path[1]\modelsSL\phi$, where $\pi$ is the only path in
  $\out(\state,\assign)$\\
 $\ga,\assign,\state\modelsSL\phi\until\phi'$ & if & there is $i\geq 0$
  such that, letting $\path$ be the only path
  in $\out(\state,\assign)$,\\
 & & $\ga,\assigntrans[\path_{\leq i}]{\assign},\path[i]\modelsSL\phi'$, and for all $0\leq j
  <i$, $\ga,\assigntrans[\path_{\leq j}]{\assign},\path[j]\modelsSL\phi$. 
\end{tabular}

Finally, we define an \emph{\SL-sentence} to be an \SL-formula $\phi$ such
that $\free(\phi)=\emptyset$ and every temporal operator in $\phi$ is
under the scope of a binding for each agent.


\section{Branching-time Strategy Logic}
\label{sec-branching}

We now present a first
extension of Strategy Logic. In \SL, temporal
operators are allowed only when every agent has been assigned a
strategy, which leaves only one possible outcome
. Here we relax this constraint: a temporal
formula can be evaluated on the outcome of a \emph{partial} strategy
assignment. The outcome of such an assignment is a tree that contains
all paths corresponding to all possible completions of the assignment, which is why we
use the path quantification of branching-time temporal logic. We also
add the \emph{unbinding} operator as considered in \eg \cite{DBLP:journals/corr/LaroussinieM13}, making it possible to unbind an agent
from its strategy.
We first show that the logic thus obtained, called \BSL, has the same expressivity as
\SL, by providing linear translations in both directions. 
The unbinding operator is thus just convenient syntactic sugar. 
Then we further extend \BSL by allowing it to refer to actions taken
by agents, and obtain the logic \BSLplus that, we postulate,\bm{prove?} is
strictly more expressive than \SL and \BSL. \BSL has two
advantages: first, the semantics is slightly cleaner  than that of
\SL, as it is defined for all formulas and all assignments;
second, the unbinding operator makes it possible to easily express that we unbind
an agent, at no complexity cost. Finally, because it  can
explicitly refer to
 actions and  consider outcomes of partial assignments,
it is possible in \BSLplus  to express properties
of strategies, such as being memoryless or uniform, as we show in Section~\ref{sec-epistemic}.

\subsection{Syntax}

The syntax of \BSL adds two operators to \SL. First, the \emph{path
  quantifier}, borrowed from classic branching-time temporal logics: $\E\psi$ intuitively reads as ``there
exists an outcome of the currently fixed strategies in which $\psi$
holds''. Second, the \emph{unbinding operator}: $(a,\unb)\phi$ means
``$\phi$ holds after Agent~a has been unbound from her strategy, if any''. We define two variants, one
(\BSL) where
formulas cannot talk about the actions taken by the agents, and one
(\BSLplus) where they can. Also, as for \CTLs, we find it convenient to distinguish between state
and path formulas. Finally,  the set of \BSL-formulas (resp. \BSLplus-formulas) is
the set of state formulas given by the following grammar:
\begin{align*}\mbox{State formulas:\hspace{1cm}}&\phi ::= p \mid  \neg\phi \mid \phi\ou\phi   \mid
\Estrat \phi \mid (a,\var)\phi \mid \unbind \phi \mid \E \psi \\
  \mbox{Path formulas:\hspace{1cm}}&\psi ::= \phi \mid \neg \psi \mid
  \psi \ou \psi \mid \X \psi \mid \psi \until \psi, \end{align*}
where $p\in\AP$ (resp. $p\in\APplus)$, $\var\in\setvar$ and $a\in\Ag$. 

Observe that $\BSL\subset\BSLplus$. In addition to the shorthand defined in Section~\ref{sec-SL}, we also
 define the dual of the path quantifier: $\A\phi\egdef\neg\E\neg\phi$.
 Finally, we write \BSLpluspath (resp. \BSLpath) for the set of \BSLplus (resp. \BSL) path
 formulas.

\subsection{Semantics}

State formulas are evaluated in a state of (the unfolding of) a CGS,
and path formulas in paths. Since \BSL is a syntactical fragment of \BSLplus, it is enough to
define the latter's semantics. 

Let  $\phi\in\BSLplus$ be a state
formula (resp. let $\psi\in\BSLpluspath$ be a path
formula), and let $\ga=(\sstates,\delta,q_\init,\gval)$ be
a CGS. Let $\state\in\ga$ be a state, $\path\in\PathsInf$ a path, and let $\assign$ be
an  assignment for $\phi$
(resp. for $\psi$). The semantics of \BSLplus
is defined inductively as follows:

\vspace{5pt}
\begin{tabular}{lcl}
 $\ga,\assign,\state\modelsBSL p$ & if & $p\in \gvalp(\state)$\\
  $\ga,\assign,\state\modelsBSL \neg \phi$ & if & $\ga,\assign,\state
  \not\modelsBSL \phi$  \\
  $\ga,\assign,\state\modelsBSL \phi\ou\phi'$ & if & $\ga,\assign,\state\modelsBSL
 \phi$  or  $\ga,\assign,\state\modelsBSL \phi'$ \\
  $\ga,\assign,\state\modelsBSL \Estrat\phi$ & if & there exists
  $\strat\in\setstrat$ such that $\ga,\assign[\var\mapsto\strat],\state\modelsBSL \phi$\\
$\ga,\assign,\state\modelsBSL (a,\var)\phi$ & if &
$\ga,\assign[a\mapsto\assign(\var)],\state\modelsBSL \phi$\\
$\ga,\assign,\state\modelsBSL \unbind\phi$ & if &
$\ga,\assign[a\mapsto\unb],\state\modelsBSL \phi$\\
 $\ga,\assign,\state\modelsBSL \E\psi$  & if &  there exists
  $\path\in\out(\state,\assign)$ such that
  $\ga,\assign,\path\modelsBSL\psi$\\[5pt]

  $\ga,\assign,\path\modelsBSL \phi$ & if &
  $\ga,\assign,\path[0]\modelsBSL \phi$\\ 
  $\ga,\assign,\path\modelsBSL \neg \psi$  & if & $\ga,\assign,\path \not\modelsBSL \psi$\\
  $\ga,\assign,\path\modelsBSL \psi\ou\psi'$ & if & $\ga,\assign,\path\modelsBSL\psi$ \;or\;
  $\ga,\assign,\path\modelsBSL\psi'$ \\
  $\ga,\assign,\path\modelsBSL\X\psi$ & if &
  $\ga,\assigntrans[\path_{\leq 1}]{\assign},\path_{\geq 1}\modelsBSL\psi$\\
  $\ga,\assign,\path\modelsBSL\psi\until\psi'$ & if &  there is $i\geq 0$ such
  that $\ga,\assigntrans[\path_{\leq i}]{\assign},\path_{\geq
    i}\modelsBSL\psi'$, and\\
& &  for all $0\leq j < i$, $\ga,\assigntrans[\path_{\leq j}]{\assign},\path_{\geq j}\modelsBSL\psi$
\end{tabular}
\vspace{5pt}

The semantics of the unbinding operator comes without surprise:
$\unbind\phi$ holds in an assignment if $\phi$ holds after we have
removed $a$ from the domain of this assignment. For the path
quantifier,  $\E\psi$ holds if there is an outcome of the current
assignment in the current state that verifies $\psi$. 

For a \BSLplus-formula $\phi$, we write $\ga,\assign\modelsBSL\phi$ if $\PathsInf(\state_\init),\assign,\state_\init\modelsBSL\phi$.
Classically, a \emph{\BSLplus-sentence} is a \BSLplus-formula without free
variables, and similarly for \BSL-sentences. For a \BSLplus-sentence $\phi$, we write $\ga\modelsBSL\phi$
if  $\ga,\assign\modelsBSL\phi$ for any assignment $\assign$.


\subsection{Expressivity of \BSL}
\label{sec-comp-SL}



We establish that \BSL and \SL have the same expressivity, and
postulate that\bm{prove?}
\BSLplus is strictly more expressive than both logics.
First, given two logics $\lang$ and $\lang'$ whose sentences are
evaluated on CGS's, we say that $\lang'$  \emph{subsumes}
 $\lang$, written $\lang\subsumed\lang'$, if for every
$\lang$-sentence $\phi$ there is an $\lang'$-sentence $\phi'$ such that,
for every CGS $\ga$, $\ga\models\phi$ if, and only if,
$\ga\models\phi'$. We say that $\lang$ and $\lang'$ are
\emph{equiexpressive} if $\lang\subsumed\lang'$ and $\lang'\subsumed\lang$.
We say that $\lang'$ \emph{strictly subsumes} $\lang$, written $\lang\strsubsumed\lang'$, if
$\lang\subsumed\lang'$ and $\lang'\not\subsumed\lang$.

We start with the easy direction, showing that \BSL subsumes
 \SL.

\begin{definition}
\label{def-tr-SL-BSL}
The translation $\tr:\SL\to\BSL$ is defined by induction as follows:

\begin{tabular}{lclclclclcl}
  $\tr(p)$ & = & $p$ & &
  $\tr(\neg\phi)$ & = & $\neg\tr(\phi)$ & &
  $\tr(\phi \ou \phi')$ & = & $\tr(\phi) \ou \tr(\phi')$ \\
  $\tr(\X\phi)$ & = & $\E\X\tr(\phi)$ & & & & & & 
  $\tr(\phi\until\phi')$ & = & $\E\tr(\phi)\until\tr(\phi')$ \\
  $\tr(\Estrat\phi)$ & = & $\Estrat\tr(\phi)$ & & & & & & 
  $\tr((a,\var)\phi)$ & = & $(a,\var)\tr(\phi)$
\end{tabular}
\end{definition}

The following proposition easily follows from the fact that a complete
assignment defines a unique path from any state.
\begin{proposition}
\label{prop-SL-BSL}
For every CGS $\ga$, for every   formula
$\phi\in\SL$,  assignment
$\assign$ for $\phi$  state $\state\in\ga$ such that
$\ga,\assign,\state\modelsSL\phi$ is defined, it holds that
$\ga,\assign,\state\modelsSL\phi$ if, and only if, $\ga,\assign,\state\modelsBSL\tr(\phi)$.
\end{proposition}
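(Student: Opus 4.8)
The plan is to prove the biconditional by structural induction on the \SL-formula $\phi$, simultaneously over all CGS $\ga$, states $\state$, and assignments $\assign$ for $\phi$ for which $\ga,\assign,\state\modelsSL\phi$ is defined. For the atomic case $p\in\AP$ I would simply note that $\gvalp(\state)\cap\AP=\gval(\state)$, since the propositions that $\gvalp$ adds to $\gval$ all belong to $\APact$, which is disjoint from $\AP$; hence $p\in\gval(\state)$ iff $p\in\gvalp(\state)$. The Boolean connectives, the strategy quantifier $\Estrat$, and the binding operator $(a,\var)$ are immediate: $\tr$ commutes with each of them, the two semantics treat them by identical clauses, and the induction hypothesis applies directly to the subformulas under the updated assignments $\assign[\var\mapsto\strat]$ and $\assign[a\mapsto\assign(\var)]$.

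The two temporal operators are where the translation actually does something, and both cases hinge on the single observation recorded before the statement: whenever $\ga,\assign,\state\modelsSL\X\phi$ or $\ga,\assign,\state\modelsSL\phi\until\phi'$ is defined, the assignment $\assign$ is complete, so $\out(\state,\assign)$ is the singleton $\{\path\}$. Consider $\tr(\X\phi)=\E\X\tr(\phi)$: since $\out(\state,\assign)=\{\path\}$, the clause for $\E$ makes $\ga,\assign,\state\modelsBSL\E\X\tr(\phi)$ equivalent to $\ga,\assign,\path\modelsBSL\X\tr(\phi)$, which unfolds to $\ga,\assigntrans[\path_{\leq 1}]{\assign},\path_{\geq 1}\modelsBSL\tr(\phi)$. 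Because $\tr(\phi)$ is a \BSL state formula, its value on a path depends only on that path's first state, here $\path[1]$; and because assignment translation preserves the domain, $\assigntrans[\path_{\leq 1}]{\assign}$ is again complete, so the induction hypothesis lets me replace $\modelsBSL\tr(\phi)$ at $\path[1]$ by $\modelsSL\phi$ at $\path[1]$, which is exactly the \SL clause for $\X\phi$. The $\until$ case is identical in spirit: $\E(\tr(\phi)\until\tr(\phi'))$ is satisfied precisely when the unique path $\path$ satisfies $\tr(\phi)\until\tr(\phi')$, and applying the induction hypothesis at the witnessing index $i$ and at each $j<i$ along $\path$ turns this into the \SL clause for $\phi\until\phi'$.

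The only genuinely delicate point, the main obstacle such as it is, is the bookkeeping around definedness: at each inductive step I must verify that the \SL semantics of the subformula to which I apply the induction hypothesis is itself defined. This is automatic in the temporal cases, where the ambient assignment is already complete and its translations remain complete, and in the remaining cases it holds because neither the Booleans nor $\Estrat$ nor $(a,\var)$ can make a defined subformula undefined. Everything beyond this is a routine unwinding of the two semantic definitions.
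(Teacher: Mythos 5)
Your proof is correct and follows essentially the same route as the paper's own (sketched) argument: the crux in both is that whenever a temporal operator's \SL{} semantics is defined the assignment is complete, so $\out(\state,\assign)$ is a singleton and the path quantifier $\E$ introduced by $\tr$ quantifies over exactly one path. You simply carry out in full the induction that the paper leaves as "the other cases are trivial/similar," including the harmless bookkeeping about $\gvalp$ versus $\gval$ on $\AP$ and the preservation of completeness under assignment translation.
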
 

\begin{proofsketch}
  We only treat the case of the ``next'' operator, the one for
  ``until'' is similar and all the others are trivial.
Assume that $\ga,\assign,\state\modelsSL\X\phi$ is defined. This means
that $\assign$ is a complete assignment, hence $\out(\state,\assign)$
is a singleton, and the result follows from the semantics of \SL and \BSL.
\end{proofsketch}

We now show that \SL also subsumes \BSL. Indeed, the
path quantifier can be simulated by a series of
existential strategy
 quantifications and corresponding bindings for the agents whose
strategies are undefined in the current assignment. Concerning the
unbinding operator, the idea is to remember, along the translation,
which agents have been unbound, and use this information to correctly
translate path quantifiers, as described above. Formally, we define a translation from
\BSL to \SL, parameterized by the set of agents who are  ``currently''
not bound to a strategy.

\begin{definition}
\label{def-tr-BSL-SL}
Let $\bounda\subseteq\Ag$. The translations $\trb_\bounda:\BSL\to\SL$
and $\trbpath_\bounda:\BSLpath\to\SL$ are
defined by mutual induction as follows:

\vspace{5pt}
\begin{tabular}{lclclcl}
  $\trb_\bounda(p)$ & = & $p$ & \hspace{2cm} & $\trbpath_\bounda(\phi)$ & = & $\trb_\bounda(\phi)$\\
 $\trb_\bounda(\neg\phi)$ & = & $\neg\trb_\bounda(\phi)$ & &  $\trbpath_\bounda(\neg\psi)$ & = & $\neg\trbpath_\bounda(\psi)$\\
 $\trb_\bounda(\phi \ou \phi')$ & = & $\trb_\bounda(\phi) \ou
 \trb_\bounda(\phi')$  & &  $\trbpath_\bounda(\psi \ou \psi')$ & = & $\trbpath_\bounda(\psi) \ou
 \trbpath_\bounda(\psi')$ \\
 $\trb_\bounda(\Estrat\phi)$ & = & $\Estrat\trb_\bounda(\phi)$ & &  $\trbpath_\bounda(\X\psi)$ & = & $\X\trbpath_\bounda(\psi)$\\
  $\trb_\bounda((a,\var)\phi)$ & = & $(a,\var)
  \trb_{\bounda\setminus\{a\}}(\phi)$ & &  $\trbpath_\bounda(\psi\until\psi')$ & = & $\trbpath_\bounda(\psi)\until\trbpath_\bounda(\psi')$\\
  $\trb_\bounda(\unbind\phi)$ & = & $ \trb_{\bounda\union\{a\}}(\phi)$
  & & & & \\
  $\trb_\bounda(\E\psi)$ & = &
  \multicolumn{5}{l}{$\Estrat[\var_1]\ldots\Estrat[\var_k](a_{i_1},\var_1)\ldots
  (a_{i_k},\var_k)\trbpath_\bounda(\psi)$,}\\
& & \multicolumn{5}{l}{where
  $\var_1,\ldots,\var_k$ are fresh variables and
  $\{a_{i_1},\ldots,a_{i_k}\}=\bounda$.} 
\end{tabular}
\end{definition}

First, observe that if $p$ is a \BSL-formula, then
  it is in $\AP$ and not in $\APact$, so that $p$ is indeed an \SL
  formula. 
Before establishing the
correctness of the translation, we need the following lemma. It
essentially says that the evaluation of a formula $\trb_\bounda(\phi)$
in an assignment $\assign$ is independent of how $\assign$ is defined
on $\bounda$: for an agent $a\in\bounda$, whether $\assign$ is defined
on $a$ or not, and in the former case how it is defined, is of no
consequence as the translation $\trb_\bounda$ remembers that $a$ is
not supposed to be bound to a strategy. 



\newcounter{lem-extend-assign}
\setcounter{lem-extend-assign}{\value{lemma}}
\begin{lemma}
\label{lem-extend-assign}
Let $\ga$ be a CGS, 
 $\state\in\ga$ a state,   $\phi\in\BSL$  a state formula and
 $\assign$ an assignment  for $\phi$.
For all $A\subseteq \Ag$, $\{a_{i_1},\ldots,a_{i_k}\}\subseteq A$ and
for all
$\strat_{1},\ldots,\strat_{k}\in\setstrats$, letting
$\assign_1=\assign[a_{i_1}\mapsto \strat_1,\ldots,a_{i_k}\mapsto
\strat_k]$ and $\assign_2=\assign[a_{i_1}\mapsto \unb,\ldots,a_{i_k}\mapsto
\unb]$, it holds that:
\begin{itemize}
\item[{\bf P1:}] $\ga,\assign,\state\modelsSL\trb_{A}(\phi)$ if, and only if,
$\ga,\assign_1,\state\modelsSL\trb_{A}(\phi)$, and
\item[{\bf P2:}] $\ga,\assign,\state\modelsSL\trb_{A}(\phi)$ if, and only if,
$\ga,\assign_2,\state\modelsSL\trb_{A}(\phi)$.
\end{itemize}
\end{lemma}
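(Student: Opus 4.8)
The plan is to prove \textbf{P1} and \textbf{P2} simultaneously by structural induction on the state formula $\phi$, reading the statement as universally quantified over the set $A$, over the subset $\{a_{i_1},\ldots,a_{i_k}\}$, over the strategies, and over the assignment $\assign$; this universal form of the induction hypothesis is exactly what lets us re-instantiate it at $A\setminus\{a\}$ and at $A\union\{a\}$ in the recursive cases. For $\phi=p$ the truth value depends only on $\gvalp(\state)$, so both statements are immediate, and the Boolean cases $\neg\phi$ and $\phi\ou\phi'$ follow directly from the induction hypothesis. The case $\Estrat\phi$ uses that a variable update $[\var\mapsto\strat]$ and the agent updates $[a_{i_l}\mapsto\cdot]$ act on disjoint parts of the domain and hence commute, so the induction hypothesis applied to $\phi$ (still with $A$, and with the assignment $\assign[\var\mapsto\strat]$, which is easily seen to be an assignment for $\phi$) is what is needed. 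The case $\unbind\phi$ is equally direct: since $\trb_A(\unbind\phi)=\trb_{A\union\{a\}}(\phi)$ and $\{a_{i_1},\ldots,a_{i_k}\}\subseteq A\subseteq A\union\{a\}$, the induction hypothesis for $\phi$ at $A\union\{a\}$ applies verbatim.

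The binding case $(a,\var)\phi$, for which $\trb_A((a,\var)\phi)=(a,\var)\trb_{A\setminus\{a\}}(\phi)$, requires more care, and I expect it to be the main piece of bookkeeping. The crucial observation is that the binding $(a,\var)$ overwrites the value of $a$ by $\assign(\var)$, and this value is unchanged when passing from $\assign$ to $\assign_1$ or $\assign_2$ because $\var$ is a variable, not an agent. Hence, whether or not $a$ belongs to $\{a_{i_1},\ldots,a_{i_k}\}$, after performing the binding the two assignments to be compared differ only on the agents of $\{a_{i_1},\ldots,a_{i_k}\}\setminus\{a\}$, a subset of $A\setminus\{a\}$: on the right-hand side these agents carry the modifications, while the $a$-modification, if any, has been absorbed by the binding. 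One then applies the induction hypothesis to $\phi$ at $A\setminus\{a\}$, with base assignment $\assign[a\mapsto\assign(\var)]$ and the remaining modifications; a short computation that the relevant updates commute identifies the two resulting assignments with those actually produced on each side.

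The path-quantifier case $\E\psi$ is the conceptual heart of the lemma, and, perhaps surprisingly, it needs no appeal to the induction hypothesis. Here $\trb_A(\E\psi)=\Estrat[\var_1]\ldots\Estrat[\var_{|A|}](a_{j_1},\var_1)\ldots(a_{j_{|A|}},\var_{|A|})\trbpath_A(\psi)$, where $\{a_{j_1},\ldots,a_{j_{|A|}}\}=A$ and the $\var_l$ are fresh. The leading block of quantifications and bindings re-binds \emph{every} agent of $A$ to the freshly quantified strategies $\sigma_1,\ldots,\sigma_{|A|}$. Since the modified subset $\{a_{i_1},\ldots,a_{i_k}\}$ is contained in $A$, the modifications carried by $\assign_1$ (binding those agents) or by $\assign_2$ (unbinding them) are completely overwritten by this block; outside $A$ the assignments $\assign$, $\assign_1$ and $\assign_2$ already coincide, and the fresh variables receive the same $\sigma_l$ in each case. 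Consequently, for each choice of witnesses the assignment reached just before evaluating $\trbpath_A(\psi)$ is literally the same whether one starts from $\assign$, from $\assign_1$, or from $\assign_2$, so the three evaluations agree and both \textbf{P1} and \textbf{P2} hold. This realizes the design intent noted before the lemma: $\trb_A$ ``remembers'' $A$ by re-quantifying, so the incoming assignment on $A$ is irrelevant.

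Throughout, one should also record the routine side-condition that $\assign_1$ and $\assign_2$ remain assignments for $\trb_A(\phi)$: the translation preserves free variables, and binding or unbinding agents never changes the variable part of an assignment, so $\free(\trb_A(\phi))=\free(\phi)$ is contained in the domains of both $\assign_1$ and $\assign_2$, and the semantics is well defined in each case. The only genuinely delicate step is the binding case $(a,\var)\phi$, where the bridge from ``independence on $A$'' (what we want) to ``independence on $A\setminus\{a\}$'' (what the induction hypothesis supplies) must be made precise through the absorption of the $a$-modification by the binding.
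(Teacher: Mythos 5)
Your proof is correct: the structural induction with the statement universally quantified over $A$, the modified agent set and the base assignment is exactly what is needed, and your key observations — that the binding case absorbs the $a$-modification and drops to $A\setminus\{a\}$, and that the $\E\psi$ case closes without the induction hypothesis because the leading block of quantifications and bindings overwrites the assignment on all of $A$ — are the substance of the argument. The paper omits its proof of this lemma for lack of space, so there is nothing to diverge from; your argument matches the design intent the authors describe just before the lemma (that $\trb_A$ ``remembers'' which agents are meant to be unbound), and I see no gap.
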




\newcounter{prop-BSL-SL}
\setcounter{prop-BSL-SL}{\value{proposition}}
\begin{proposition}
\label{prop-BSL-SL}
Let $\ga$ be a CGS. For every  state formula $\phi\in\BSL$,  assignment $\assign$
for $\phi$ and state 
$\state\in\ga$,
 it holds that
$\ga,\assign,\state\modelsBSL\phi$ if, and only if,
$\ga,\assign,\state\modelsSL\trb_{\Ag\setminus \dom(\assign)}(\phi)$.
\end{proposition}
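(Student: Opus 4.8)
The plan is to prove the equivalence by structural induction on the state formula $\phi\in\BSL$, carrying along the parameter set $A=\Ag\setminus\dom(\assign)$ and proving the statement simultaneously (by mutual induction) with the analogous claim for path formulas $\psi\in\BSLpath$, namely that $\ga,\assign,\path\modelsBSL\psi$ iff $\ga,\assign,\path\modelsSL\trbpath_{\Ag\setminus\dom(\assign)}(\psi)$. The atomic, boolean, strategy-quantifier, and binding cases are routine: for $\Estrat\phi$ the domain of the assignment grows by a variable (not an agent), so $A$ is unchanged and the induction hypothesis applies directly; for $(a,\var)\phi$ the binding operator adds $a$ to $\dom(\assign)$, so on the left the parameter set shrinks to $A\setminus\{a\}$, which exactly matches the $\trb_{A\setminus\{a\}}$ appearing in Definition~\ref{def-tr-BSL-SL}.

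The two genuinely interesting cases are the unbinding operator and the path quantifier, and here is where Lemma~\ref{lem-extend-assign} does the work. For $\unbind\phi$, the \BSL{} semantics sends us to $\assign[a\mapsto\unb]$, whose undefined-agent set is $A\cup\{a\}$, so I would like to invoke the induction hypothesis to get equivalence with $\ga,\assign[a\mapsto\unb],\state\modelsSL\trb_{A\cup\{a\}}(\phi)$, which is precisely $\trb_A(\unbind\phi)$ evaluated at the unbound assignment. But the right-hand side of the proposition requires evaluation at $\assign$ itself, not at $\assign[a\mapsto\unb]$; this is exactly the gap closed by property \textbf{P2} of the lemma, which says that for a formula of the form $\trb_{A\cup\{a\}}(\phi)$ it is irrelevant whether $a$ is removed from the assignment, since $a\in A\cup\{a\}$.

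For $\E\psi$, the \BSL{} semantics quantifies existentially over paths in $\out(\state,\assign)$, i.e.\ over completions of the partial assignment along the $k$ undefined agents $\{a_{i_1},\ldots,a_{i_k}\}=A$. The translation replaces this by $\Estrat[\var_1]\ldots\Estrat[\var_k](a_{i_1},\var_1)\ldots(a_{i_k},\var_k)\trbpath_A(\psi)$. I would argue that choosing strategies $\strat_1,\ldots,\strat_k$ for the fresh variables and binding each $a_{i_j}$ to $\strat_j$ produces the complete assignment $\assign_1=\assign[a_{i_1}\mapsto\strat_1,\ldots,a_{i_k}\mapsto\strat_k]$, whose unique outcome path from $\state$ ranges, as the $\strat_j$ vary, over exactly $\out(\state,\assign)$; thus the path-level induction hypothesis transfers the equivalence. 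Once the agents in $A$ are bound, the undefined-agent set of $\assign_1$ no longer contains them, so the path-formula translation should still carry parameter $A$ --- this is consistent because Lemma~\ref{lem-extend-assign}(\textbf{P1}) guarantees that $\trb_A$-formulas are insensitive to how $\assign_1$ is defined on $A$, reconciling the bookkeeping.

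The main obstacle I anticipate is precisely this bookkeeping of the parameter set across the path quantifier: in the \BSL{} evaluation of $\E\psi$ the assignment $\assign$ stays partial (set $A$), whereas after translation the corresponding \SL{} evaluation binds all of $A$ and proceeds with a complete assignment, yet the path-formula translation $\trbpath_A$ still records $A$ as ``unbound.'' Reconciling these two viewpoints --- showing the choice is harmless because no agent of $A$ can be unbound again inside a path formula without an explicit $\trb$-level adjustment, and invoking \textbf{P1}/\textbf{P2} to absorb the discrepancy --- is the delicate step. I would therefore state and apply Lemma~\ref{lem-extend-assign} carefully at both the $\unbind$ case and the $\E$ case, as it is the technical linchpin that makes the inductive invariant go through.
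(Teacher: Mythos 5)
Your proof follows exactly the route the paper intends (the paper omits this proof for lack of space, but it introduces Lemma~\ref{lem-extend-assign} immediately beforehand precisely as the tool for the $\unbind$ and $\E$ cases, which is how you use it): a structural induction on state formulas, mutual with path formulas, with \textbf{P2} closing the unbinding case and \textbf{P1} reconciling the partial \BSL{} assignment with the completed \SL{} assignment under the path quantifier. The only point to tighten in a full write-up is the formal statement of the path-level induction hypothesis --- \SL{} has no satisfaction relation at paths, so it should be phrased as: for every complete assignment $\assign_1$ obtained from $\assign$ by binding the agents of $A=\Ag\setminus\dom(\assign)$ to strategies (and assigning fresh variables accordingly), with $\path$ the unique path in $\out(\state,\assign_1)$, one has $\ga,\assign,\path\modelsBSL\psi$ iff $\ga,\assign_1,\state\modelsSL\trbpath_{A}(\psi)$ --- which is exactly the invariant your discussion of the $\E$ case already relies on.
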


We can now prove that \SL and \BSL have the same expressivity on the
level of sentences.

\begin{theorem}
\label{theo-SL-BSL}
\SL and \BSL are equiexpressive, with linear translations in both directions.
\end{theorem}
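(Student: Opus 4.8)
The plan is to establish both directions of subsumption on the level of sentences, drawing on the two translations already defined and their correctness results. For $\SL\subsumed\BSL$, I would use the translation $\tr$ from Definition~\ref{def-tr-SL-BSL} together with Proposition~\ref{prop-SL-BSL}. The key observation is that if $\phi$ is an $\SL$-sentence, then $\free(\phi)=\emptyset$ and every temporal operator sits under a binding for every agent, so when we evaluate $\ga,\assign,\state_\init\modelsSL\phi$ the assignment $\assign$ can be taken to be empty (or arbitrary, since there are no free variables) and is completed by the bindings before any temporal operator is reached. Since $\tr$ is defined homomorphically on the Boolean, strategy-quantifier, and binding cases, and merely prepends a path quantifier $\E$ to each temporal operator, $\tr(\phi)$ is a $\BSL$-sentence whose path quantifiers are always evaluated at points where the assignment is complete. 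By Proposition~\ref{prop-SL-BSL}, $\ga,\emptyset,\state_\init\modelsSL\phi$ iff $\ga,\emptyset,\state_\init\modelsBSL\tr(\phi)$, which is exactly $\ga\models\phi$ iff $\ga\models\tr(\phi)$. I would also note that $\tr$ is linear, since it is a syntax-directed traversal adding at most one operator per node.

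For the converse $\BSL\subsumed\SL$, I would invoke Proposition~\ref{prop-BSL-SL} instantiated at the initial point. Given a $\BSL$-sentence $\phi$, we have $\free(\phi)=\emptyset$, so any assignment is an assignment for $\phi$; take $\assign=\emptyset$, whence $\dom(\assign)=\emptyset$ and $\Ag\setminus\dom(\assign)=\Ag$. Proposition~\ref{prop-BSL-SL} then gives $\ga,\emptyset,\state_\init\modelsBSL\phi$ iff $\ga,\emptyset,\state_\init\modelsSL\trb_{\Ag}(\phi)$. I would then argue that $\trb_{\Ag}(\phi)$ is an $\SL$-sentence: it clearly has no free variables (the only fresh variables introduced in the $\E$-clause are immediately quantified and bound), and every temporal operator lies under a binding for each agent, because the $\E$-case replaces the path quantifier by existential quantifications and bindings for exactly the agents in $\bounda$ that are currently unbound, and $\bounda$ is maintained to equal $\Ag\setminus(\text{currently bound agents})$ along the recursion. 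Finally, for linearity I would point out that the translation $\trb_\bounda$ expands each $\E\psi$ into at most $|\Ag|$ nested quantifier–binding pairs; since $\Ag$ is fixed and finite, this is a constant blow-up per node, so $\trb_{\Ag}$ is linear in the size of $\phi$.

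The genuine content of the theorem is already packaged into Propositions~\ref{prop-SL-BSL} and~\ref{prop-BSL-SL} (the latter resting on Lemma~\ref{lem-extend-assign}), so the step demanding the most care here is \emph{not} a semantic calculation but the verification that each translation sends sentences to sentences. The subtle point is tracking, in the $\BSL\to\SL$ direction, that $\trb_\bounda(\E\psi)$ genuinely binds every agent before the path formula $\trbpath_\bounda(\psi)$ is reached. I would check that the parameter $\bounda$ is the invariant ``set of agents not currently bound'': the binding case $\trb_\bounda((a,\var)\phi)=(a,\var)\trb_{\bounda\setminus\{a\}}(\phi)$ removes $a$ from $\bounda$ since $a$ becomes bound, the unbinding case adds $a$ back, and so at any $\E$-node the agents in $\bounda$ are precisely those still lacking a binding, which are exactly the ones the translation quantifies over. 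Combined with the initialization $\bounda=\Ag$ (no agent is bound at the root), this guarantees that under every temporal operator of $\trb_{\Ag}(\phi)$ all agents have received a binding, so the output is a bona fide $\SL$-sentence. Both subsumptions together yield equiexpressiveness with linear translations, completing the proof.
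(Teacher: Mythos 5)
Your proposal is correct and follows essentially the same route as the paper's proof: both directions are discharged by Propositions~\ref{prop-SL-BSL} and~\ref{prop-BSL-SL} instantiated at the initial state with a (trivial/empty) assignment, plus the observation that $\tr$ and $\trb_\Ag$ map sentences to sentences and are linear since $|\Ag|$ is a fixed constant. Your extra care in verifying the invariant on the parameter $\bounda$ and that temporal operators are only reached under complete assignments simply makes explicit what the paper states more briefly.
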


\begin{proof}
We first prove that $\SL\subsumed\BSL$.  Let $\phi$ be an \SL-sentence. Clearly, $\tr(\phi)$ is a
  \BSL-sentence. Let $\ga$ be a CGS with initial state $\state_\init$, and let $\assign$ be any
  assignment. By definition, $\ga\modelsSL\phi$ iff $\ga,\assign,\state_\init\modelsSL\phi$.
By Proposition~\ref{prop-SL-BSL},
$\ga,\assign,\state_\init\modelsSL\phi$ iff
$\ga,\assign,\state_\init\modelsBSL\tr(\phi)$, and
by definition, the latter is equivalent to $\ga\modelsBSL\tr(\phi)$.

Now, to prove that $\BSL\subsumed\SL$, let $\phi$ be a \BSL-sentence,
and let $\phi'=\trb_\Ag(\phi)$. Observe that $\phi'$ is an
\SL-sentence: indeed, every temporal operator in $\phi$ is under the
scope of some path quantifier, and by definition of $\trb_\Ag$,
every temporal operator in $\phi'$ is thus under the scope of a
binding for each agent. Now, let $\ga$ be a CGS and
 $\assign$ an assignment such that $\Ag\setminus\dom(\assign)=\Ag$.
By definition,
$\ga\modelsBSL\phi$ iff
$\ga,\assign,\state_\init\modelsBSL\phi$ (recall
that since $\phi$ is a sentence, the choice of $\assign$ does not matter for the evaluation of $\phi$). By
Proposition~\ref{prop-BSL-SL}, the latter is equivalent to
$\ga,\assign,\state_\init\modelsSL\phi'$, which by definition is equivalent
to $\ga\modelsSL\phi'$.

Concerning the size of the translations, the one of
Definition~\ref{def-tr-SL-BSL} is clearly linear, and the one in
Definition~\ref{def-tr-BSL-SL} is in $O(2|\Ag||\phi|)$, where $|\Ag|$
is the number of agents and $|\phi|$ the number of symbols in
$\phi$. The translation is thus linear in the size of the formula.
\end{proof}

We can therefore transfer to \BSL the following results known about \SL \cite{DBLP:journals/tocl/MogaveroMPV14}:

\begin{corollary}
\label{cor-model-check}
The model-checking problem for \BSL is nonelementary decidable.
\end{corollary}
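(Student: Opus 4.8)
The plan is to obtain the result as an immediate transfer from \SL via the translation of Definition~\ref{def-tr-BSL-SL}, whose correctness and size bound are given by Theorem~\ref{theo-SL-BSL}. Concretely, I would reduce model-checking for \BSL to model-checking for \SL, which is known to be nonelementary decidable~\cite{DBLP:journals/tocl/MogaveroMPV14}. The reduction comes essentially for free: given a \BSL-sentence $\phi$ and a CGS $\ga$, Theorem~\ref{theo-SL-BSL} guarantees that $\trb_\Ag(\phi)$ is a genuine \SL-sentence and that $\ga\modelsBSL\phi$ if and only if $\ga\modelsSL\trb_\Ag(\phi)$.

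The decision procedure for \BSL then proceeds in two steps: first compute the translation $\trb_\Ag(\phi)$, and then invoke the \SL model-checking algorithm on the inputs $\ga$ and $\trb_\Ag(\phi)$, returning its answer verbatim. Correctness of this procedure is exactly the equivalence recalled above, so the only thing left to check is that the composition stays within the nonelementary bound.

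For the complexity I would rely on the size analysis carried out in the proof of Theorem~\ref{theo-SL-BSL}: the translation $\trb_\Ag$ is computable in polynomial time and produces a formula of size $O(2|\Ag||\phi|)$, that is, linear in $|\phi|$ up to the factor $2|\Ag|$. The main (and admittedly minor) point that deserves care is confirming that this blow-up does not push the procedure out of the nonelementary class: since the \SL model-checking bound is nonelementary in the formula size and our preprocessing inflates that size by at most a linear factor while running in polynomial time, composing the two steps yields a procedure whose running time is nonelementary. Hence the model-checking problem for \BSL is nonelementary decidable.
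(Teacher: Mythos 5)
Your proposal is correct and matches the paper's intent exactly: the corollary is stated as an immediate transfer of the known nonelementary decidability of \SL\ model-checking through the linear translation $\trb_\Ag$ of Definition~\ref{def-tr-BSL-SL}, whose correctness and size bound are exactly those established in Theorem~\ref{theo-SL-BSL}. Nothing further is needed.
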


\begin{corollary}
\label{cor-model-check}
The  satisfiability problem for \BSL is $\Sigma^1_1$-hard.
\end{corollary}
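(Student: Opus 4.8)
The plan is to obtain $\Sigma^1_1$-hardness of \BSL-satisfiability by a straightforward many-one reduction from the satisfiability problem for \SL, which is known to be $\Sigma^1_1$-hard \cite{DBLP:journals/tocl/MogaveroMPV14}. The reduction is nothing more than the translation $\tr:\SL\to\BSL$ of Definition~\ref{def-tr-SL-BSL}.

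First I would observe that $\tr$ is effectively computable: its defining clauses are purely syntactic, and, as noted in the proof of Theorem~\ref{theo-SL-BSL}, it produces a \BSL-sentence of size linear in the input. Hence the map $\phi\mapsto\tr(\phi)$ is given by a total recursive function, and in particular the image of an \SL-sentence is a genuine \BSL-sentence.

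Next, the crucial correctness property is that $\tr$ preserves truth on every model, not merely validity. This is exactly Proposition~\ref{prop-SL-BSL} lifted to sentences, as already exploited inside Theorem~\ref{theo-SL-BSL}: for every CGS $\ga$ and every \SL-sentence $\phi$, one has $\ga\modelsSL\phi$ if, and only if, $\ga\modelsBSL\tr(\phi)$. Consequently $\phi$ has a model iff $\tr(\phi)$ has a model, i.e.\ $\phi$ is satisfiable iff $\tr(\phi)$ is satisfiable. This is precisely a many-one reduction of \SL-satisfiability to \BSL-satisfiability, and since $\tr$ is computable, composing it with any reduction witnessing the $\Sigma^1_1$-hardness of \SL-satisfiability yields $\Sigma^1_1$-hardness of \BSL-satisfiability.

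There is essentially no difficult step here: the entire content is already packaged in the equiexpressivity theorem and its underlying computable, truth-preserving translation. The only points deserving care are the direction of the reduction and the mode of preservation. For hardness, one must push the known-hard problem (\SL) \emph{into} the target (\BSL), so the reduction uses the $\SL\subsumed\BSL$ translation $\tr$ rather than the reverse translation $\trb_\bounda$; and one must note that $\tr$ transfers \emph{satisfiability} precisely because Proposition~\ref{prop-SL-BSL} gives truth preservation pointwise over all CGS, which is strictly stronger than preservation of validity and is what makes the equivalence ``$\phi$ satisfiable iff $\tr(\phi)$ satisfiable'' go through.
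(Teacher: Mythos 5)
Your proof is correct and is exactly the argument the paper intends: the corollary is stated as a transfer of the known $\Sigma^1_1$-hardness of \SL-satisfiability through the computable, linear, truth-preserving translation $\tr$ of Definition~\ref{def-tr-SL-BSL}, justified by Proposition~\ref{prop-SL-BSL} and Theorem~\ref{theo-SL-BSL}. You correctly identify both the direction of the reduction and the fact that pointwise truth preservation over all CGSs is what yields preservation of satisfiability.
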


On the other hand, because \BSLplus can express properties about the
actions taken by agents, it should clearly be strictly more
expressive than \BSL and thus also \SL, but we have not yet proved this. 

\begin{conjecture}
  \BSLplus strictly subsumes \BSL and \SL.
\end{conjecture}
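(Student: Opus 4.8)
The plan is to reduce the conjecture to a single separation and then prove that separation via an action-blind invariance property of \BSL. First, note that $\BSL\subset\BSLplus$ syntactically and that the \BSLplus-semantics restricts to the \BSL-semantics on \BSL-formulas, so the identity translation witnesses $\BSL\subsumed\BSLplus$; combined with $\SL\subsumed\BSL$ from Theorem~\ref{theo-SL-BSL} this gives $\SL\subsumed\BSLplus$ as well. Hence it suffices to prove $\BSLplus\not\subsumed\BSL$: indeed, if in addition $\BSLplus\subsumed\SL$ held, then with $\SL\subsumed\BSL$ and transitivity of $\subsumed$ we would get $\BSLplus\subsumed\BSL$, a contradiction, so $\BSLplus\not\subsumed\SL$ follows as well, yielding both strict subsumptions.

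For the separation I would exhibit a \BSLplus-sentence coupling an action with an \AP-observable consequence, for instance $\phi_0\egdef\E\X(p^a_{\act_0}\et q)$ with $q\in\AP$ and $\act_0\in\Act$, reading ``some outcome starts with agent $a$ playing $\act_0$ and reaching a $q$-state''. Take $\Ag=\{a\}$ and $\Act=\{\act_0,\act_1\}$, and let $\ga$ have initial state $\state_\init$ with $\delta(\state_\init,\act_0)$ a $q$-state and $\delta(\state_\init,\act_1)$ a non-$q$-state (both with self-loops on all actions, to ensure infinite paths). Then $\ga\modelsBSL\phi_0$. Now let $\sigma$ be the transposition of $\act_0$ and $\act_1$ and let $\ga^\sigma\egdef(\sstates,\delta^\sigma,\state_\init,\gval)$ with $\delta^\sigma(\state,\acpr)\egdef\delta(\state,\sigma^{-1}\circ\acpr)$, where $\sigma\circ\acpr$ denotes the decision $a\mapsto\sigma(\acpr(a))$. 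In $\ga^\sigma$ the unique first decision reaching the $q$-state has $a$ playing $\act_1$, so $p^a_{\act_0}$ fails wherever $q$ holds, giving $\ga^\sigma\not\modelsBSL\phi_0$.

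The heart of the argument is the invariance lemma: for every bijection $\sigma$ on $\Act$ and every \BSL-sentence $\psi$, $\ga\modelsBSL\psi$ iff $\ga^\sigma\modelsBSL\psi$. To prove it I would set up a state-sequence-preserving bijection $\fpath\mapsto\fpath^\sigma$ between the initial paths of $\ga$ and those of $\ga^\sigma$ (replacing each decision $\acpr$ by $\sigma\circ\acpr$ while keeping the states), which induces a bijection on strategies via $\strat^\sigma(\fpath^\sigma)\egdef\sigma(\strat(\fpath))$ and hence on assignments $\assign\mapsto\assign^\sigma$. The key geometric fact is that $\out(\state,\assign)$ in $\ga$ and $\out(\state,\assign^\sigma)$ in $\ga^\sigma$ traverse exactly the same state sequences: an agent playing $\assign(a)(\pi_{\leq k})$ in $\ga$ corresponds to playing $\sigma(\assign(a)(\pi_{\leq k}))$ in $\ga^\sigma$, and $\delta^\sigma$ undoes the $\sigma$, so successors coincide. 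One then proves by induction on the state/path formula that $\ga,\assign,\state\modelsBSL\psi$ iff $\ga^\sigma,\assign^\sigma,\state\modelsBSL\psi$ (and the path analogue), using that a \BSL-formula reads only \AP through $\gvalp$, i.e.\ only $\gval$, which $\sigma$ leaves untouched; the strategy quantifier, binding, unbinding and path quantifier are all handled by transporting witnesses along $\assign\mapsto\assign^\sigma$.

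Granting the lemma, the conjecture follows: if some \BSL-sentence $\psi$ were equivalent to $\phi_0$ over all CGSs, then $\ga\modelsBSL\psi$ and $\ga^\sigma\not\modelsBSL\psi$, contradicting invariance since $\ga^\sigma$ is $\ga$ relabelled by $\sigma$. Thus $\BSLplus\not\subsumed\BSL$, and with the reductions above, $\BSL\strsubsumed\BSLplus$ and $\SL\strsubsumed\BSLplus$. I expect the main obstacle to be the bookkeeping in the invariance lemma, specifically checking that the assignment correspondence commutes with the $\fpath$-translation of assignments so that the induction goes through the $\X$ and $\until$ cases cleanly; since strategies are full-history objects, one must verify that $(\assigntrans[\pi_{\leq i}]{\assign})^{\sigma}$ equals $\assigntrans[\pi^\sigma_{\leq i}]{(\assign^\sigma)}$, which is exactly where the precise definition of $\fpath\mapsto\fpath^\sigma$ is used.
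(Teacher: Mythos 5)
The paper offers no proof of this statement: it is explicitly left as a conjecture (``we postulate that \ldots but we have not yet proved this''), the text offering only the informal remark that \BSLplus{} can speak about actions while \BSL{} cannot. Your argument is therefore not a variant of the paper's proof but an actual proof sketch of the open claim, and its architecture is the right one: reduce everything to $\BSLplus\not\subsumed\BSL$, exhibit a sentence coupling an action proposition with an $\AP$-observable effect, and establish that \BSL-sentences are invariant under relabelling the actions by a bijection $\sigma$ on $\Act$. The invariance lemma is the correct key step, and your identification of the commutation $(\assigntrans[\path_{\leq i}]{\assign})^{\sigma}=\assigntrans[\path^\sigma_{\leq i}]{(\assign^{\sigma})}$ as the delicate point in the $\X$/$\until$ cases is exactly right; that identity does hold with your definition of $\fpath\mapsto\fpath^\sigma$, and the quantifier cases go through because $\strat\mapsto\strat^\sigma$ and $\assign\mapsto\assign^\sigma$ are bijections.

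Two details need repair before this is airtight. First, the paper assumes CGSs are \emph{unfolded} so that each non-initial state has a unique incoming decision $\acpr_\state$, which is what makes $\gvalp$ (and hence $p^a_\act$) well defined; your two-state structure with self-loops on all actions is not unfolded, so you should work with its tree unfolding (the argument is unaffected, since in the unfolding of $\ga$ the $q$-child of the root carries $p^a_{\act_0}$ while in the unfolding of $\ga^\sigma$ it carries $p^a_{\act_1}$). Second, the paper defines $\ga\modelsBSL\phi$ for a sentence via ``$\ga,\assign\modelsBSL\phi$ for any assignment $\assign$'', and the truth of $\E\X(p^a_{\act_0}\et q)$ genuinely depends on whether $a$ is bound in $\assign$: under an assignment binding $a$ to the strategy playing $\act_1$ at the root, the formula is false in $\ga$ as well. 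Either adopt the convention implicit in the paper's own proof of Theorem~\ref{theo-SL-BSL} (evaluate sentences in assignments binding no agent), or, more robustly, take $\phi_0\egdef(a,\unb)\E\X(p^a_{\act_0}\et q)$ (unbinding every agent) so that its truth value is independent of the assignment; your invariance lemma, being stated for arbitrary $(\assign,\state)$, yields $\ga\modelsBSL\psi$ iff $\ga^\sigma\modelsBSL\psi$ under either reading. With these two fixes the separation, and hence both strict subsumptions, follow as you describe.
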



\section{Epistemic Strategy Logic}
\label{sec-epistemic}

In this section, we further extend the framework to account for imperfect
information. For the logic to be expressive enough to express
uniformity of strategies, we need to talk about actions played by the
agents, and we therefore allow the use of atomic propositions in $\APact$.

\subsection{Syntax}
 
We add distributed knowledge operators to the language, one for each
group of agents. The syntax of $\ESL$ is therefore described by the following grammar:
\begin{align*}\mbox{State formulas:\hspace{1cm}}&\phi ::= p \mid  \neg\phi \mid \phi\ou\phi  \mid \E \phi \mid
\Estrat \phi \mid (a,x)\phi \mid \unbind \phi \mid \DA\phi \\
  \mbox{Path formulas:\hspace{1cm}}&\psi ::= \phi \mid \neg \psi \mid
  \psi \ou \psi \mid \X \psi \mid \psi \until \psi, \end{align*}
where $p\in\APplus$, $x\in\setvar$, $a\in\Ag$ and $A\subseteq\Ag$.

We define, for each $a\in\Ag$, $\Ka\phi\egdef\DA[\{a\}]\phi$, and as for \BSL and \BSLplus, we write \ESLpath  for the set of \ESL-path
 formulas.

\subsection{Semantics}

To represent the agents' imperfect information about the current
situation in the game, we add  binary
indistinguishability relations in CGSs. Most works consider
equivalence relations on states, which are extended to initial paths
according to how much memory agents are supposed to have. Because in
this work we do not want to make any such assumptions, we 
adopt a more general approach and directly take equivalence
relations on initial paths. 

We call \emph{imperfect information concurrent game structure}
(ICGS) a tuple $\iga=(\ga,\{\rela\}_{a\in\Ag})$, where $\ga$ is a CGS
and for each $a\in\Ag$, $\rela\;\subseteq (2^{\APplus})^*\times (2^{\APplus})^*$
is an indistinguishability equivalence relation for Agent~$a$. 
For $A\subseteq \Ag$, we let $\relA\egdef\inter_{a\in A}\rela$:
it is the \emph{distributed knowledge relation} of agents
in $A$.
Given two initial
paths $\fpath=\state_0 \acpr_1\state_1\ldots \acpr_n\state_n$ and
$\fpath'=\state'_0\acpr_1\state'_1\ldots \acpr_m\state'_m$ and a set
of agents $A\subseteq\Ag$, we shall write
$\fpath\relA\fpath'$ whenever $\gvalp(\state_0)\ldots\gvalp(\state_n)\relA\gvalp(\state'_0)\ldots\gvalp(\state'_m)$, \ie
when the sequences of extended valuations along the plays are related by
$\relA$.
As usual in epistemic logic, the intended meaning of
$\fpath\rela\fpath'$ is that in initial path $\fpath$, Agent~$a$
considers it possible that $\fpath'$ is the actual  initial path. 

Because agents may infer knowledge from what they recall of the past
of an initial path, we cannot evaluate state formulas merely in states
of the game as we do for \BSLplus, but we evaluate them in initial
paths instead. Also, in order not to forget the past when we consider
outcomes of an assignment, we define for every initial path $\fpath$
and assignment $\assign$, 
$\out(\fpath,\assign)\egdef\{\fpath\cdot\path\mid \path\in\out(\last(\fpath),\assign)\}$.

Let  $\phi\in\ESL$ be a state
formula (resp. let $\psi\in\ESLpath$ be a path
formula), and let $\ga=(\sstates,\delta,q_\init,\gval)$ be
a CGS.  Let $\assign$ be
an  assignment for $\phi$
(resp. for $\psi$), let $\fpath\in\PathsFin$ be an initial path, 
$\path\in\PathsInf$  a path, and $i\geq 0$. The semantics of \ESL
is defined inductively as follows:

\vspace{5pt}
\begin{tabular}{lcl}
 $\iga,\assign,\fpath\modelsESL p$ & if & $p\in \gvalp(\last(\fpath))$\\
  $\iga,\assign,\fpath\modelsESL \neg \phi$ & if & $\iga,\assign,\fpath
  \not\modelsESL \phi$  \\
  $\iga,\assign,\fpath\modelsESL \phi\ou\phi'$ & if & $\iga,\assign,\fpath\modelsESL
 \phi$  or  $\iga,\assign,\fpath\modelsESL \phi'$ \\
  $\iga,\assign,\fpath\modelsESL \Estrat\phi$ & if & there exists
  $\strat\in\setstrat$ such that $\iga,\assign[\var\mapsto\strat],\fpath\modelsESL \phi$\\
$\iga,\assign,\fpath\modelsESL (a,\var)\phi$ & if &
$\iga,\assign[a\mapsto\assign(\var)],\fpath\modelsESL \phi$\\
$\iga,\assign,\fpath\modelsESL \unbind\phi$ & if &
$\iga,\assign[a\mapsto\unb],\fpath\modelsESL \phi$\\
 $\iga,\assign,\fpath\modelsESL \E\psi$  & if &  there exists
  $\path\in\out(\fpath,\assign)$ such that
  $\iga,\assign,\path,|\fpath|\modelsESL\psi$\\
$\iga,\assign,\fpath\modelsESL \DA\phi$ & if & for every initial path 
$\fpath'\in\PathsFin$  such that $\fpath\relA\fpath'$,  $\iga,\assign,\fpath'\modelsESL \phi$
\\[5pt]
  $\iga,\assign,\path,i\modelsESL \phi$ & if &
  $\iga,\assign,\path_{\leq i}\modelsESL \phi$\\ 
  $\iga,\assign,\path,i\modelsESL \neg \psi$  & if & $\iga,\assign,\path,i \not\modelsESL \psi$\\
  $\iga,\assign,\path,i\modelsESL \psi\ou\psi'$ & if & $\iga,\assign,\path,i\modelsESL\psi$ \;or\;
  $\iga,\assign,\path,i\modelsESL\psi'$ \\
  $\iga,\assign,\path,i\modelsESL\X\psi$ & if &
  $\iga,\assigntrans[{\path[i,i+1]}]{\assign},\path,i+1\modelsESL\psi$\\
  $\iga,\assign,\path,i\modelsESL\psi\until\psi'$ & if &  there is $j\geq i$ such
  that $\iga,\assigntrans[{\path[i,j]}]{\assign},\path,j\modelsESL\psi'$, and\\
& &  for all $i\leq k < j$, $\iga,\assigntrans[{\path[i,k]}]{\assign},\path,k\modelsESL\psi$
\end{tabular}
\vspace{5pt}

We now give an example of a property that can be expressed in \ESL but
not in \SL, \BSL or \BSLplus. The property we consider is the
uniformity property of strategies, which is central in
the paradigm of imperfect information.

\subsection{Properties of strategies} 
\label{sec-uniform}

A \emph{uniform strategy}, in the context of games with imperfect information, usually means a strategy that 
respects the player's information, \ie a strategy that assigns the
same action in situations that are indistinguishable to the player
\cite{van2001games,DBLP:journals/fuin/JamrogaH04}. In
\SL, temporal formulas being only evaluated in complete assignments,
it is clear that one cannot compare several outcomes of a given
strategy for a player, so that it is hopeless to express such
uniformity properties. In \BSL, one can consider  all the
possible outcomes of a strategy, but one cannot talk about the actions
taken by agents, so that expressing that a strategy assigns the same
action in different situations is not possible either. In \BSLplus,
we can refer to the precise actions taken by the agents, but we have
no way of relating situations that are indistinguishable to an agent. 
However, as we show below, \ESL is expressive enough for this sort of
properties.

We define a notion of uniformity, that we call
\emph{weak uniformity}, and that  asks for a strategy to be uniform on
all its
outcomes from the current situation. 

\begin{definition}
\label{def-uniform}
Let $\iga=(\ga,\{\rela\}_{a\in\Ag})$ be an ICGS, let
$\fpath\in\PathsFin$ be an initial path and $a\in\Ag$ an agent. A  strategy $\strat$ is
\emph{weakly uniform for $a$ in $\fpath$} if, for all initial paths
$\fplay'\in\out(\fpath,[a\mapsto\strat])$ and $\fplay''\in\PathsFin$ such that
$\fplay'\rela\fplay''$, $\strat(\fplay')=\strat(\fplay'')$.
\end{definition}


Now let us define the following  \ESL-formula.
\begin{definition}
  For each $a\in\Ag$, we define the formula
\[\wuniformaux:=\A\G(\bigou\limits_{\act\in\scriptsize\Act}\!\!\!\Ka \A\X
p^a_{\act})
.\]
\end{definition}

To understand the meaning of this formula, first observe that if an
assignment $\assign$ binds an agent $a$ to a strategy $\strat$, \ie
$\assign(a)=\strat$, then for every initial path $\fpath\in\PathsFin$,
there is an action $\act\in\Act$ such that $p^a_\act$ holds in all
continuations of $\fpath$ of the form
$\fpath'=\fpath\cdot\acpr\state$ that follow $\assign$: this action is
$\strat(\fpath)=\acpr(a)$, the action played by Agent~$a$ in initial path
$\fpath$ according to $\strat$. Therefore,
$\iga,\assign,\fpath\models\A\X p^a_{\strat(\fpath)}$. It follows that, when evaluated in an
initial path $\fpath$ and  assignment $[a\mapsto\strat]$, where $\strat$ is a strategy, formula $\wuniform$
says that at every point of every outcome in $\out(\fpath,\assign)$,
there is an action that Agent~$a$ plays in \emph{all $\rela$-related
  nodes}. 
Let us fix an ICGS $\iga=(\ga,\{\rela\}_{a\in\Ag})$.

\begin{proposition}
\label{prop-wuniform}
For every initial path
$\fpath\in\PathsFin$ and agent $a\in\Ag$, a strategy $\strat$ is
weakly uniform for Agent~$a$ in $\fpath$ if, and only if, $\iga,[a\mapsto\strat],\fpath\models\wuniformaux$.
\end{proposition}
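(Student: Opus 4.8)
The plan is to unfold the semantics of $\wuniformaux$ at $([a\mapsto\strat],\fpath)$ into a combinatorial condition on $\strat$ and $\rela$, and to check that this condition is precisely weak uniformity of $\strat$ for $a$ in $\fpath$. I would work from the innermost operator outward: first characterise $\A\X p^a_\act$, then the knowledge operator $\Ka$ together with the disjunction over actions, and finally the outer $\A\G$. Both directions of the equivalence will then fall out of the same chain of reductions, so there is no need to argue the two implications separately.

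The key ingredient is the observation already recorded just before the statement: whenever an assignment binds $a$ to a strategy $\strat$, then for every initial path $\fplay$ one has $\iga,[a\mapsto\strat],\fplay\modelsESL\A\X p^a_\act$ if, and only if, $\act=\strat(\fplay)$. Indeed, in every outcome in $\out(\fplay,[a\mapsto\strat])$ agent $a$ plays $\strat(\fplay)$ at the first step, so in the unfolded CGS the proposition $p^a_{\strat(\fplay)}$ holds at every immediate successor and no other $p^a_{\act'}$ does. Granting this, evaluate $\chi:=\bigou_{\act\in\Act}\Ka\A\X p^a_\act$ at an arbitrary initial path $\fplay'$ under an assignment binding $a$ to $\strat$. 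By the semantics of $\Ka$ and the characterisation above, $\iga,[a\mapsto\strat],\fplay'\modelsESL\Ka\A\X p^a_\act$ holds iff $\strat(\fplay'')=\act$ for every $\fplay''$ with $\fplay'\rela\fplay''$. Since $\rela$ is reflexive we have $\fplay'\rela\fplay'$, so the only possible witness of the disjunction is $\act=\strat(\fplay')$; hence $\iga,[a\mapsto\strat],\fplay'\modelsESL\chi$ holds if, and only if, $\strat(\fplay'')=\strat(\fplay')$ for all $\fplay''$ with $\fplay'\rela\fplay''$.

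It then remains to unfold the outer $\A\G$. By the semantics of the path quantifier and of $\G$, $\iga,[a\mapsto\strat],\fpath\modelsESL\A\G\chi$ holds iff $\chi$ holds at every finite prefix $\fplay'$ extending $\fpath$ of every path in $\out(\fpath,[a\mapsto\strat])$, i.e. at every $\fplay'\in\out(\fpath,[a\mapsto\strat])$ in the sense of Definition~\ref{def-uniform}. Combining this with the previous paragraph yields exactly: for all $\fplay'\in\out(\fpath,[a\mapsto\strat])$ and all $\fplay''$ with $\fplay'\rela\fplay''$, $\strat(\fplay')=\strat(\fplay'')$, which is the definition of weak uniformity of $\strat$ for $a$ in $\fpath$.

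The step I expect to require the most care is the assignment bookkeeping hidden in the two previous paragraphs. As one descends through $\A\G$ along a path, the assignment is repeatedly translated, so at $\fplay'$ agent $a$ is bound not to $\strat$ but to a translate $\strattrans{\strat}$; moreover the $\Ka$ operator re-evaluates $\A\X p^a_\act$ at initial paths $\fplay''$ that need not extend the translation prefix. The point to verify is that on any initial path issued from the root the translated strategy agrees with $\strat$ — such paths fall into the ``unchanged'' case of the definition of the $\fplay$-translation — so that the characterisation of $\A\X p^a_\act$ applies verbatim at every $\fplay'$ and every $\fplay''$ encountered, with $a$ effectively playing $\strat$ on full histories throughout. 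Once this is checked, the equivalences above hold as stated and the proposition follows.
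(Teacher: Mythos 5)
The paper omits the proof of this proposition, so there is no official argument to compare against; your outline follows the informal explanation the paper gives just before the statement (characterise $\A\X p^a_\act$, then $\Ka$ together with the disjunction over actions, then the outer $\A\G$), which is certainly the intended route, and your identification of the assignment bookkeeping as the delicate step is exactly right.

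The problem is that this delicate step is not actually closed by what you say. Your key claim is that $\iga,[a\mapsto\strat],\fplay\modelsESL\A\X p^a_\act$ iff $\act=\strat(\fplay)$, and you propagate it through the translations by arguing that a translated strategy agrees with $\strat$ on initial paths issued from the root. But the outcome semantics never applies the strategy to an initial path issued from the root: by definition, $\out(\fplay,\assign)$ is built from $\out(\last(\fplay),\assign)$, whose paths start at $\last(\fplay)$, so the first action of $a$ is $\assign(a)$ applied to the one-state initial path $\last(\fplay)$ — not to $\fplay$. What the semantics literally yields is therefore $\act=\strat(\last(\fplay''))$ at the top level, and, after translation by $\path[i,j]$, either $\strat(\path[i,j])$ (when $\last(\fplay'')=\path[j]$) or again $\strat(\last(\fplay''))$; in no case $\strat(\fplay'')$ for the full history $\fplay''$, which is what Definition~\ref{def-uniform} refers to. The ``unchanged case'' of the $\fpath$-translation concerns initial paths starting at a state other than $\last(\fpath)$, which is the wrong family of paths to invoke here. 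To make the argument work you must either exploit the standing assumption that the CGS is unfolded, so that a state determines its history from $\state_\init$ and these different applications of $\strat$ can be identified (which still requires saying something about which $\fplay''$ can be $\rela$-related to an outcome node), or adopt the reading of $\out(\fpath,\assign)$ under which strategies are fed the full prefixes of $\fpath\cdot\path$ — the reading the paper's informal discussion tacitly uses — and then redo the translation bookkeeping under that reading. As written, the reconciliation between ``the action the semantics makes $a$ play at $\fplay''$'' and ``$\strat(\fplay'')$ in the definition of weak uniformity'' is asserted rather than proved, and this is the whole content of the proposition.
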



However, $\wuniform$ only has the intended meaning in an assignment
that does not bind any other agent: indeed, otherwise we would only
have that the  strategy considered is uniform on the subset of its
outcomes that follow the strategies assigned to the other agents. 
Consider now the following formula:

\begin{definition}
  For each $a\in\Ag$, noting $\{a_1,\ldots,a_k\}=\Ag\setminus\{a\}$, we define the formula
\[\wuniform:=(a_1,\unb)\ldots (a_k,\unb)\wuniformaux
.\]
\end{definition}

The following proposition holds:

\begin{proposition}
For every initial path
$\fpath\in\PathsFin$, assignment $\assign$ and agent $a\in\Ag$, a strategy $\strat$ is
weakly uniform for Agent~$a$ in $\fpath$ if, and only if, $\iga,\assign[a\mapsto\strat],\fpath\models\wuniform$.
\end{proposition}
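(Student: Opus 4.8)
The plan is to reduce this proposition to the previous one (Proposition~\ref{prop-wuniform}) by showing that the unbinding prefix $(a_1,\unb)\ldots(a_k,\unb)$ precisely discards all bindings for agents other than $a$, so that evaluating $\wuniform$ in $\assign[a\mapsto\strat]$ coincides with evaluating $\wuniformaux$ in the assignment $[a\mapsto\strat]$ that Proposition~\ref{prop-wuniform} handles. Since weak uniformity of $\strat$ for $a$ in $\fpath$ depends, by Definition~\ref{def-uniform}, only on $\strat$ and on $\out(\fpath,[a\mapsto\strat])$ and not on any other agent's strategy, the semantic content on the left-hand side is already independent of $\assign$; the work is entirely on the right-hand side.

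First I would unfold the semantics of the unbinding operators. By the semantic clause for $\unbind$, we have $\iga,\assign[a\mapsto\strat],\fpath\models(a_1,\unb)\ldots(a_k,\unb)\wuniformaux$ if and only if $\iga,\assign',\fpath\models\wuniformaux$, where $\assign'=\assign[a\mapsto\strat][a_1\mapsto\unb]\ldots[a_k\mapsto\unb]$. The key observation is that $\assign'$ has domain contained in $\{a\}$ and maps $a$ to $\strat$; more precisely, since $\{a_1,\ldots,a_k\}=\Ag\setminus\{a\}$, every agent other than $a$ has been removed from the domain, so $\assign'$ agrees with $[a\mapsto\strat]$ on all agents. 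I would then argue that the evaluation of $\wuniformaux$ depends only on how the assignment is defined on the agents, since the formula contains no free variables: indeed, $\wuniformaux=\A\G(\bigou_{\act\in\Act}\Ka\A\X p^a_\act)$ has no occurrence of any variable, so the only part of the assignment that matters is its restriction to $\Ag$. Consequently $\iga,\assign',\fpath\models\wuniformaux$ if and only if $\iga,[a\mapsto\strat],\fpath\models\wuniformaux$.

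Chaining these equivalences with Proposition~\ref{prop-wuniform} closes the argument: $\iga,\assign[a\mapsto\strat],\fpath\models\wuniform$ iff $\iga,[a\mapsto\strat],\fpath\models\wuniformaux$ iff $\strat$ is weakly uniform for Agent~$a$ in $\fpath$. The one point requiring care, and the main obstacle, is justifying that the satisfaction of $\wuniformaux$ genuinely ignores the assignment outside its behaviour on $\Ag$ and specifically on $a$. This needs a small auxiliary claim, provable by a routine induction on the structure of the (variable-free) formula, stating that for any two assignments that coincide on all agents, a sentence without free variables holds in one iff it holds in the other; the interesting inductive cases are the path quantifier $\E\psi$ (whose outcomes $\out(\fpath,\assign)$ depend only on the restriction of $\assign$ to agents) and the temporal operators (whose translations by $\assigntrans{\cdot}$ preserve the agreement on agents). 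Once this claim is in hand, the proposition follows immediately, since both $\assign'$ and $[a\mapsto\strat]$ bind exactly $a$ to $\strat$ among the agents.
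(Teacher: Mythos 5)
Your argument is correct: unfolding the semantics of the unbinding prefix to obtain an assignment that agrees with $[a\mapsto\strat]$ on all agents, invoking an assignment-irrelevance lemma for the variable-free formula $\wuniformaux$ (noting that $\out(\fpath,\cdot)$ and the $\fpath$-translations depend only on the agent part of the assignment), and then chaining with Proposition~\ref{prop-wuniform} is exactly the intended reduction; the paper omits this proof for lack of space, but its discussion of why the unbinding prefix is needed makes clear this is the argument the authors have in mind. You also correctly isolate the one point needing care, namely that the resulting assignment may still bind variables, which is why the auxiliary claim must be stated as agreement on agents rather than equality of assignments.
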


We now illustrate how various semantics of \ATL with imperfect
information can be expressed in \ESL. We take the example of the
\ATL formula  $\Estrat[A]\F p$, where $A\subseteq\Ag$. Assume that
$A=\{a_1,\ldots,a_k\}$ and $\Ag\setminus A=\{a_{k+1},\ldots,a_n\}$. We
consider three semantics:  the basic
one from \cite{jamroga2003some}, in which
strategies are just required to be uniform, the \emph{de dicto}
semantics, where in addition the players must know that there is a strategy to
achieve their goal, but may ignore what that strategy is, and the
\emph{de re} semantics, in which there must exist a strategy that the
players  know it ensures their goal (see
\cite{DBLP:journals/fuin/JamrogaH04}, Sec. 3.2). With the first
semantics,  $\Estrat[A]\F p$ would be translated
in \ESL
as:
\[\Estrat[x_1]\ldots \Estrat[x_k](a_1,x_1)\ldots(a_k,x_k) (\biget\limits_{1\leq i \leq
  k}\wuniform[a_i] \et \A\F p).\]

For the \emph{de dicto} semantics, one would write instead: 
\[\DA\Estrat[x_1]\ldots \Estrat[x_k](a_1,x_1)\ldots(a_k,x_k) (\biget\limits_{1\leq i \leq
  k}\wuniform[a_i] \et \A\F p),\]

while for the \emph{de re} semantics, one would write:
\[\Estrat[x_1]\ldots \Estrat[x_k] \DA(a_1,x_1)\ldots(a_k,x_k) (\biget\limits_{1\leq i \leq
  k}\wuniform[a_i] \et \A\F p).\]

One may object that the notion of weak uniformity we consider is too
weak compared to the usual one, which is that a strategy should be
equal on all pairs of related initial paths. We argue that it is
enough for a strategy to be uniform on all the initial paths it may be
involved in while evaluating the formula. 

For instance, in the example
above, the objective is $\A\F p$, so that it is enough to ensure that
strategies for the agents are uniform on their outcome: if a
satisfying set of strategies contains one $\strat_i$ that is not defined
uniformly on some initial paths that are outside its outcome, this
$\strat_i$ can easily be turned into a uniform strategy in the usual
sense, it will still satisfy the formula.

Should we consider a more complex objective, in particular involving knowledge, weak
uniformity may not be sufficient though. Consider the \ESL formula
$\Estrat[x](a,x)\A\G\K_a\A\F p$, where  $a\in\Ag$, which means that
Agent~$a$ wants a strategy such that she always knows that $p$ will
eventually be reached. This objective not only considers outcomes of the
strategy  from the current situation, but also outcomes from initial
paths equivalent to the latter outcomes. In this case, we could strengthen
the requirement on Agent~$a$'s strategy by repeating the weak-uniformity requirement
after each knowledge operator. In the example:
\[\Estrat[x](a,x)(\wuniform \et \A\G\K_a (\wuniform \et \A\F p)).\]

Finally, observe that if we introduced an artificial agent $\amem$ associated to the
relation that relates two initial paths if they end up in the same
state, then the formula $\wuniform[\amem]$ would characterize
strategies that are memoryless on their outcomes from the current
initial path, in the sense that their definition only depends on the
last state of each initial path.




\section{Conclusion}
\label{sec-conclusion}

We have enriched \SL with two operators, the path quantifier and the
unbinding operator, which are convenient but do not add expressivity
in the perfect information case; interestingly though, they do not increase complexity
either. In the context of imperfect information however, these
operators together with knowledge operators and the ability to talk
about actions, allowed us to express properties of strategies which
are usually fixed in the semantics of the logics, such as being
uniform, \emph{de re}, \emph{de dicto}, memoryless\ldots This feature
makes our Epistemic Strategy Logic able to deal with a vast class of
agents without having to change the semantics, and thus unifies many
of the previous proposals in the area.

Of course this comes at a price, and the model-checking problem for
this logic is certainly undecidable with perfect-recall relations and
several agents. We believe that the next steps are, first, to see
whether the syntactical fragments studied for \SL with perfect
information, such as One-Goal or Boolean-Goal Strategy Logic, can be
transferred to \BSL and then to \ESL, and see whether they enjoy
better complexity properties. The second natural move would be to look
at structures which are known to work well with multiple agents with
imperfect information: hierarchical knowledge \cite{BozianuDE13,DBLP:conf/popl/PnueliR89}, recurring
common knowledge of the state \cite{DBLP:journals/corr/BerwangerM14}\ldots

\bibliographystyle{eptcs}
\bibliography{biblio}



\end{document}
